\crefname{equation}{}{} 
\crefname{section}{Sec.}{Sec.}
 \newcommand{\N}{\mathbb{N}}
 \newcommand{\R}{\mathbb{R}}
 \newcommand{\X}{\mathcal{X}}
 \newcommand{\U}{\mathcal{U}}
\theoremstyle{plain}
\newtheorem{theorem}{Theorem}
\newtheorem{corollary}[theorem]{Corollary}
\newtheorem{definition}{Definition}
\newtheorem{lemma}{Lemma}
\newtheorem{assumption}{Assumption}
\title{\LARGE \bf
Learning-based Model Predictive Control for Safe Exploration 
}
\author{Torsten Koller, Felix Berkenkamp, Matteo Turchetta and Andreas Krause
\thanks{This work was supported by SNSF grant {200020\_159557}, a fellowship within the FITweltweit program of the German Academic Exchange Service (DAAD), the Vector Institute, an Open Philantropy Project AI fellowship, and the Max Planck ETH Center for Learning Systems.}
\thanks{Torsten Koller is with the Department of Computer Science, University of Freiburg, Germany. Email: kollert@informatik.uni-freiburg.de}
\thanks{Felix Berkenkamp, Matteo Turchetta and Andreas Krause are with the Learning \& Adaptive Systems Group, Department of Computer Science, ETH Zurich, Switzerland. Email: \{befelix, matteotu, krausea\}@inf.ethz.ch}%
}
\newcommand{\mytitle}{\textbf{Accepted at}
\textit{the IEEE Conference on Decision and Control, 2018
}.\\[0.7em]
\copyright 2018 IEEE. Personal use of this material is permitted. Permission from IEEE must be obtained for all other uses, in any current or future media, including reprinting/republishing this material for advertising or promotional purposes, creating new collective works, for resale or redistribution to servers or lists, or reuse of any copyrighted component of this work in other works.}
\begin{document}

\maketitle
\thispagestyle{fancy}
\pagestyle{empty}


\begin{abstract}
Learning-based methods have been successful in solving complex control tasks without significant prior knowledge about the system. However, these methods typically do not provide any safety guarantees, which prevents their use in safety-critical, real-world applications. 
In this paper, we present a learning-based model predictive control scheme that can provide provable high-probability safety guarantees.
To this end, we exploit regularity assumptions on the dynamics in terms of a Gaussian process prior to construct provably accurate confidence intervals on predicted trajectories. Unlike previous approaches, we do not assume that model uncertainties are independent.
Based on these predictions, we guarantee that trajectories satisfy safety constraints. Moreover, we use a terminal set constraint to recursively guarantee the existence of safe control actions at every iteration.
In our experiments, we show that the resulting algorithm can be used to safely and efficiently explore and learn about dynamic systems.
\end{abstract}

\section{Introduction}

 In model-based reinforcement learning (RL,\cite{Sutton1998Reinforcement}), we aim to learn the dynamics of an unknown system from data, and based on the model, derive a policy that optimizes the long-term behavior of the system. Crucial to the success of such methods is the ability to efficiently explore the state space in order to quickly improve our knowledge about the system. While empirically successful, current approaches often use exploratory actions during learning, which lead to unpredictable and possibly unsafe behavior of the system, e.g., in exploration approaches based on the \textit{optimism in the face of uncertainty} principle \cite{Xie2016Modelbased}. Such approaches are not applicable to real-world safety-critical systems. 

In this paper we introduce \textsc{SafeMPC}, a safe model predictive control (MPC) scheme that guarantees the existence of feasible \textit{return trajectories} to a safe region of the state space at every time step with high-probability. These return trajectories are identified through a novel uncertainty propagation method that, in combination with constrained MPC, allows for formal safety guarantees in learning control. 
\begin{figure}
\includegraphics{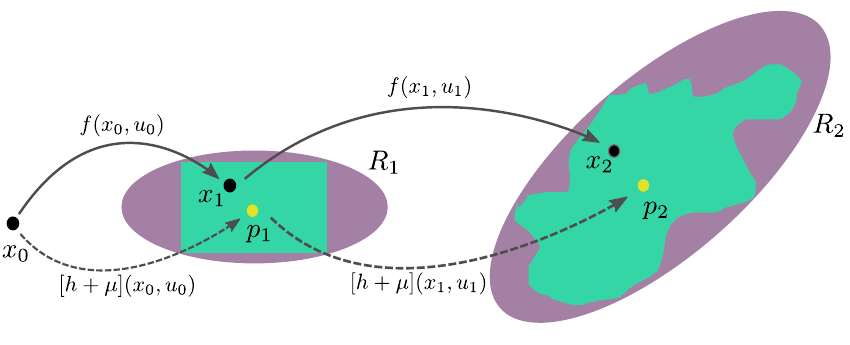}
\caption{Propagation of uncertainty over multiple time steps based on a well-calibrated statistical model of the unknown system. We iteratively compute ellipsoidal over-approximations (purple) of the intractable image (green) of the learned model for uncertain ellipsoidal inputs.
}
\end{figure}

\paragraph*{Related Work}
One area that has considered safety guarantees is robust MPC. There, we iteratively optimize the performance along finite-length trajectories at each time step, based on a known model that incorporates uncertainties and disturbances acting on the system \cite{rawlings2009model}. In a constrained robust MPC setting, we optimize these local trajectories under additional state and control constraints. Safety is typically defined in terms of recursive feasibility and robust constraint satisfaction.
In~\cite{Sadraddini2016Provably}, this definition is used to safely control urban traffic flow, while~\cite{Carson2013Robust} guarantees safety by switching between a standard and a safety mode. However, these methods are conservative since they do not update the model.

In contrast, learning-based control approaches adapt their models online based on observations of the system. This allows the controller to improve over time, given limited prior knowledge of the system. Theoretical safety guarantees in learning-based MPC (LBMPC) are established in \cite{Aswani2013Provably}. A safety mechanism for general learning-based controllers using robust MPC is proposed in~\cite{Wabersich2018Linear}. Both approaches require a known nominal linear model. The former approach requires deviations from the system dynamics to be bounded in an pre-specified polytope, the latter relies on sampling.

MPC based on Gaussian process (GP, \cite{Rasmussen2006Gaussian}) models is proposed in a number of works, e.g. \cite{Kocijan2004Gaussian,Cao2017Gaussian}. The difficulty here is that trajectories have complex dependencies on states and unbounded stochastic uncertainties.
Safety through probabilistic \textit{chance constraints} is considered in \cite{Hewing2018Cautious,jainLearningControlUsing2018,Ostafew2016Robust} based on approximate uncertainty propagation. While often being empirically successful, these approaches do not theoretically guarantee safety of the underlying system.

Another area that has considered learning for control is model-based RL. There, we aim to learn global policies based on data-driven modeling techniques, e.g., by explicitly trading-off between finding locally optimal policies (exploitation) and learning the behavior of the system globally (exploration) ~\cite{Sutton1998Reinforcement}. This results in data-efficient learning of policies in unknown systems~\cite{Deisenroth2011PILCO}. In contrast to MPC, where we optimize finite-length trajectories, in RL we typically aim to find an infinite horizon optimal policy. Hence, enforcing hard constraints in RL is challenging. Control-theoretic safety properties such as Lyapunov stability or robust constraint satisfaction are only considered in a few works~\cite{Ernst2009Reinforcement}. In~\cite{Berkenkamp2017Safe-1}, safety is guaranteed by optimizing parametric policies under stability constraints, while~\cite{Akametalu2014Reachabilitybased} guarantees safety in terms of constraint satisfaction through reachability analysis.
\paragraph*{Our Contribution}
We combine ideas from robust control and GP-based RL to design a MPC scheme that recursively guarantees the existence of a safety trajectory that satisfies the  constraints of the system. In contrast to previous approaches, we use a novel uncertainty propagation technique that can reliably propagate the confidence intervals of a GP-model forward in time. We use results from statistical learning theory to guarantee that these trajectories contain the system with high probability jointly for all time steps. In combination with a constrained MPC approach and a terminal set constraint, we then prove the safety of the system.
We apply the algorithm to safely explore the dynamics of an inverted pendulum simulation.

\section{Problem Statement}
\label{ps}
We consider a nonlinear, discrete-time dynamical system
\begin{equation}
\label{ps:eq:system}
x_{t+1} = f(x_t,u_t) = \underbrace{h(x_t,u_t)}_{\text{prior model}} + \underbrace{g(x_t,u_t)}_{\text{unknown error}},
\end{equation}
where $x_t \in \R^{n_x}$ is the state and $u_t \in \R^{n_u}$ is the control input to the system at time step $t\in \N$. We assume that we have access to a twice continuously differentiable prior model  $h(x_t,u_t)$, which could be based on a first principles physics model. The model error $g(x_t,u_t)$ is \textit{a priori} unknown and we use a statistical model to learn it by collecting observations from the system during operation. In order to provide guarantees, we need reliable estimates of the model-error. In general, this is impossible for arbitrary functions~$g$. We make the following additional regularity assumptions.

We assume that the model-error $g$ is of the form $g(z) = \sum_{l=0}^\infty \alpha_i k(z,z_l), \, \alpha_l \in \R, z = (x,u) \in \R^{n_x} \times \R^{n_u}  $, a weighted sum of distances between inputs $z$  and representer points $z_l = (x_l,u_l) \in \R^{n_x} \times \R^{n_u}$ as defined through a symmetric, positive definite \textit{kernel} $k$. This class of functions is well-behaved in the sense that they form a reproducing kernel Hilbert space (RKHS, \cite{wahba1990spline}) $\mathcal{H}_k$ equipped with an inner-product $\langle \cdot,\cdot \rangle_k$. The induced norm $||g||_k^2 = \langle g,g \rangle_k$ is a measure of the \textit{complexity} of a function $g \in \mathcal{H}_k$. Consequently, the following assumption can be interpreted as a requirement on the smoothness of the model-error $g$ w.r.t. the kernel $k$.
\begin{assumption}
\label{ass:rkhs}
The unknown function $g$ has bounded norm in the RKHS $\mathcal{H}_k$, induced by the  continuously differentiable kernel $k$, i.e. $||g||_k \leq B_g$.
\end{assumption}
 In the case of a multi-dimensional output $n_x > 1$, we follow \cite{Berkenkamp2016Bayesian} and redefine $g$ as a single-output function $\tilde{g}$ such that $\tilde{g}(\cdot,j) = g_j(\cdot)$ and assume that $||\tilde{g}||_k \leq B_g$.

We further assume that the system is subject to polytopic state and control constraints
\begin{align}
\label{ps:ass:state_constraits}
\X = \{x \in \R^{n_x} | H_x x \leq h_x, \, h_x \in \R^{m_x} \}, \\
\label{ps:ass:ctrl_constraits}
\U = \{u \in \R^{n_u} | H_u u \leq h_u, \, h_u \in \R^{m_u} \},
\end{align}
which are bounded. For example, in an autonomous driving scenario, the  state region could correspond to a highway lane and the control constraints could represent the physical limits on acceleration and steering angle of the car.

Lastly, we assume access to a backup controller that guarantees that we remain inside a given safe subset of the state space once we enter it. In the autonomous driving example, this could be a simple linear controller that stabilizes the car in a small region in the center of the lane
at slow speeds.

\begin{assumption}
\label{ps:ass:safe_set}
We are given a controller $\pi_{\mathrm{safe}}(\cdot)$ and a polytopic safe region
\begin{equation}
\X_{\mathrm{safe}} := \{ x\in \R^{n_x} | H_s x \leq h_s\} \subseteq \X,
\end{equation}
which is (robust) control positive invariant (RCPI) under~$\pi_{\mathrm{safe}}(\cdot)$. Moreover, the controller satisfies the control constraints inside~$\X_\mathrm{safe}$, i.e. $\pi_{\mathrm{safe}}(x) \in \mathcal{U} \, \forall x \in \X_{\mathrm{safe}}$.
\end{assumption}
This assumption allows us to gather initial data from the system inside the safe region even in the presence of significant model errors, since the system remains safe under the controller~$\pi_\mathrm{safe}$. Moreover, we can still guarantee constraint satisfaction asymptotically outside of~$\X_\mathrm{safe}$, if we can show that a finite sequence of control inputs eventually steers the system back to the safe set~$\X_\mathrm{safe}$. This idea and a similar definition of a safe set was introduced concurrently in \cite{Wabersich2018Linear}.
A set and corresponding controller which fulfill \cref{ps:ass:safe_set} for general dynamical systems is difficult to find. However, there has been recent progress in finding stability regions for systems of the form \cref{ps:eq:system}, which are RCPI by design, that could, under additional considerations (e.g. through polytopic inner-approximations ~\cite{BronsteinApproximationconvexsets2008}), satisfy the assumptions.

Given a controller $\pi$, ideally we want to enforce the state- and control constraints at every time step,
\begin{equation}
\label{ps:eq:safety_deterministic}
\forall t \in \N: f_\pi(x_t) \in \mathcal{X}, \, \pi(x_t) \in \mathcal{U},
\end{equation}
where $x_{t+1} = f_\pi(x_t) = f(x_t,\pi(x_t))$ denotes the closed-loop system under~$\pi$. Apart from $\pi_{\mathrm{safe}}$, which trivially and conservatively fulfills this, it is in general impossible to design a controller that enforces
\cref{ps:eq:safety_deterministic} without additional assumptions. Instead, we slightly relax this requirement to \textit{safety with high probability} throughout its operation time.
\begin{definition}
\label{ps:def:epsilon_safety}
Let $\pi: \R^{n_x} \to \R^{n_u}$ be a controller for \cref{ps:eq:system} with the corresponding closed-loop system $f_\pi$.
Let $x_0 \in \mathcal{X}$ and $\delta \in (0,1]$. A system is $\delta-$safe under the controller $\pi$ iff:
\begin{equation}
\Pr\left[ \, \forall t \in \N:  f_\pi(x_t) \in \mathcal{X}, \, \pi(x_t) \in \mathcal{U}  \right] \geq 1-\delta.
\end{equation}

\end{definition}

Based on~\cref{ps:def:epsilon_safety}, the goal is to design a control scheme that guarantees $\delta$-safety of the system~\cref{ps:eq:system}. At the same time, we want to improve our model by learning from observations collected outside of the initial safe set $\mathcal{X}_\mathrm{safe}$ during operation, which increase the performance of the controller over time. 

\section{Background}
In this section, we introduce the necessary background on GPs and set-theoretic properties of ellipsoids that we use to model our system and perform multi-step ahead predictions.
\subsection{Gaussian Processes (GPs)}
We want to learn the unknown model-error $g$ from data using a GP model.
%
A $\mathcal{GP}(m,k)$ is a distribution over functions, which is fully specified through a mean function~$m: \R^{d} \to \R$ and a \textit{covariance function} $k: \R^d \times \R^d \to \R$, where $d = n_x + n_u$. Given a set of $n$ noisy observations $y_i = f(z_i) + w_i, \, w_i \sim \mathcal{N}(0,\lambda^2), i = 1,\dots,n, \, \lambda \in \R$, we choose a zero-mean prior on $g$ as $m \equiv 0$ and regard the differences $\tilde{y}_n = [y_1 - h(z_1),\dots,y_n - h(z_n)]^T$ between prior model $h$ and observed system response at input locations $Z = [z_1,..,z_n]^T$. The posterior distribution at~$z$ is then given as a Gaussian $\mathcal{N}(\mu_n(z),\sigma_n^2(z))$
with mean and variance
\begin{align}
\label{found:gp:pred_mu}
\mu_n(z) &= k_n(z)^\mathrm{T} [K_n+\lambda^2 I_n]^{-1}\tilde{y}_n \\
\label{found:gp:pred_var}
\sigma^2_n(z) &= k(z,z) - k_n(z)^\mathrm{T} [K_n + \lambda^2 I_n]^{-1}k_n(z),
\end{align}
where $[K_n]_{ij} = k(z_i,z_j), [k_n(z)]_j = k(z,z_j)$, and $I_n$ is the $n-$dimensional identity matrix.
In the case of multiple outputs $n_x > 1$, we model each output dimension with an independent GP, $\mathcal{GP}(m_j,k_j), j = 1,..,n_x$. We then redefine \cref{found:gp:pred_mu} and \cref{found:gp:pred_var} as $\mu_n(\cdot) = (\mu_{n,1}(\cdot),..,\mu_{n,n_x}(\cdot))$ and $\sigma_n(\cdot) = (\sigma_{n,1}(\cdot),..,\sigma_{n,n_x}(\cdot))$ corresponding to the predictive mean and variance functions of the individual models.


Based on~\cref{ass:rkhs}, we can use GPs to model the unknown part of the system~\cref{ps:eq:system}, which provides us with reliable confidence intervals on the model-error $g$.
\begin{lemma}
\label{gp:lemma:rkhs_confidence_ivals}
\cite[Lemma 2]{Berkenkamp2017Safe-1}: Assume $||g||_k \leq B_g$ and that measurements are corrupted by $\lambda$-sub-Gaussian noise. Let $\beta_n = B_g + 4\lambda \sqrt{\gamma_n + 1 + \ln(1/\delta)}$, where $\gamma_n$ is the information capacity associated with the kernel $k$. Then with probability at least $1-\delta$ we have for all $ 1\leq j \leq n_x,\, z \in \mathcal{X} \times \mathcal{U}$ that $|\mu_{n-1,j}(z) - g_j(z)| \leq \beta_n \cdot \sigma_{n-1,j}(z)$.
\end{lemma}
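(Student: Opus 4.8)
The statement is imported verbatim from prior work, so the plan is to reconstruct the standard RKHS concentration argument that underlies it. First I would reduce the multi-output case to a single output: following the construction already invoked just before \cref{gp:lemma:rkhs_confidence_ivals}, I treat the output index $j$ as an extra discrete coordinate of the input, so that a single scalar function $\tilde g$ on the augmented domain with $\|\tilde g\|_k \le B_g$ encodes all $n_x$ components at once. Proving the pointwise-in-$z$ bound $|\mu_{n-1}(z) - g(z)| \le \beta_n\,\sigma_{n-1}(z)$ uniformly over this augmented domain then yields the claim for every $1 \le j \le n_x$ simultaneously, which is what avoids an extra union-bound factor of $n_x$ in $\beta_n$.

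Next I would exploit the exact correspondence between the GP posterior of \cref{found:gp:pred_mu}--\cref{found:gp:pred_var} and kernel ridge regression. Writing the residual vector as $\tilde y_{n-1} = g(Z) + w$, where $g(Z) = [g(z_1),\dots,g(z_{n-1})]^\mathrm{T}$ and $w$ is the $\lambda$-sub-Gaussian noise, I substitute into \cref{found:gp:pred_mu} to split the error into a deterministic and a stochastic part:
\[
\mu_{n-1}(z) - g(z) = \underbrace{k_{n-1}(z)^\mathrm{T}[K_{n-1}+\lambda^2 I]^{-1} g(Z) - g(z)}_{\text{bias}} \; + \; \underbrace{k_{n-1}(z)^\mathrm{T}[K_{n-1}+\lambda^2 I]^{-1} w}_{\text{noise}}.
\]
The idea is to bound each term by a multiple of $\sigma_{n-1}(z)$, so that the two contributions combine into the single coefficient $\beta_n$.

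For the bias term I would use the reproducing property $g(z) = \langle g, k(z,\cdot)\rangle_k$ together with Cauchy--Schwarz in $\mathcal{H}_k$; feeding in $\|g\|_k \le B_g$ and recognizing that the RKHS norm of the relevant feature combination equals $\sigma_{n-1}(z)$ up to the regularization scaling bounds this term by roughly $B_g\,\sigma_{n-1}(z)$, supplying the $B_g$ summand of $\beta_n$. For the noise term I would invoke a self-normalized tail inequality for vector-valued martingales (the Abbasi-Yadkori / Chowdhury--Gopalan bound): the key algebraic fact is that the quadratic form normalizing $w$ is controlled by $\tfrac12\log\det(I+\lambda^{-2}K_{n-1})$, which by definition of the information capacity is at most $\gamma_{n-1} \le \gamma_n$. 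This produces the $4\lambda\sqrt{\gamma_n + 1 + \ln(1/\delta)}$ summand and carries the probability-$(1-\delta)$ guarantee.

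The hard part will be obtaining the bound \emph{uniformly} over the continuous domain $\X\times\U$ rather than at a single fixed $z$, since a naive union bound fails over an infinite set. I would rely on the self-normalized martingale construction in feature space, whose deviation bound holds simultaneously for all $z$ by design: the $z$-dependence factors entirely through the deterministic ridge operator, while the random martingale term is controlled independently of $z$. Making this feature-space argument rigorous when $\mathcal{H}_k$ is infinite-dimensional---so that the determinant and the martingale must be interpreted through the finite Gram matrix $K_{n-1}$---is the main technical obstacle, and is precisely where the referenced concentration results do the heavy lifting.
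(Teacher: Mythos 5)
The paper offers no proof of \cref{gp:lemma:rkhs_confidence_ivals} at all: it is imported by citation from \cite[Lemma 2]{Berkenkamp2017Safe-1}, which itself rests on the Chowdhury--Gopalan self-normalized concentration bound for RKHS regression. Your reconstruction --- the bias/noise split of $\mu_{n-1}(z)-g(z)$, Cauchy--Schwarz in $\mathcal{H}_k$ yielding the $B_g\,\sigma_{n-1}(z)$ term, the self-normalized martingale inequality with $\log\det(I+\lambda^{-2}K_{n-1}) \le 2\gamma_n$ yielding the $4\lambda\sqrt{\gamma_n+1+\ln(1/\delta)}$ term uniformly in $z$, and the output-index-as-extra-input device to cover all $j$ without a union bound --- is precisely the standard argument behind the cited result, so there is no gap and nothing that diverges from the paper's (delegated) proof.
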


In combination with the prior model $h(z)$, this allows us to construct reliable confidence intervals around the true dynamics of the system \cref{ps:eq:system}. The scaling $\beta_n$ depends on the number of data points~$n$ that we gather from the system through the information capacity,
$\gamma_n = \max_{A \subset \tilde{Z} , |A| = \tilde{n}} I(\tilde{g}_A;g), \,  \tilde{Z} = \mathcal{X} \times \mathcal{U} \times \mathcal{I}, \, \tilde{n} = n \cdot n_x$, i.e. the maximum mutual information $I(\tilde{g}_A,g)$ between a finite set of samples $A$ and the function $g$.
Exact evaluation of $\gamma_n$ is NP-hard in general, but it can be greedily approximated and has sublinear dependence on $n$ for many commonly used kernels~\cite{Srinivas2010Gaussian}.

The regularity assumption~\cref{ass:rkhs} on our model-error and the smoothness assumption on the covariance function $k$ additionally imply that the function $g$ is Lipschitz.
\subsection{Ellipsoids}
We use ellipsoids to give an outer bound on the uncertainty of our system when making multi-step ahead predictions.
Due to appealing geometric properties, ellipsoids are widely used in the robust control community to compute \textit{reachable sets} \cite{Filippova2017Ellipsoidal,asselborn2013control}. These sets intuitively provide an outer approximation on the next state of a system considering all possible realizations of uncertainties when applying a controller to the system at a given set-valued input.
We briefly review some of these properties and refer to \cite{Kurzhanskii1997Ellipsoidal} for an exhaustive introduction to ellipsoids and to the derivations for the following properties.

We use the basic definition of an ellipsoid,
\begin{equation}
\label{eq:found:ell:def}
E(p,Q) := \{ x \in \R^n | (x-p)^\mathrm{T} Q^{-1}(x-p) \leq 1\},
\end{equation}
with  \textit{center} $p \in \R^n$ and a symmetric positive definite (s.p.d) \textit{shape matrix} $Q \in \R^{n \times n}$.
Ellipsoids are invariant under \textit{affine subspace transformations} such that for $A \in \R^{r \times n}, r \leq n$ with full row rank and $b \in \R^r$, we have that
\begin{equation}
\label{multi_step:ell:affine_trafo}
A \cdot E(p,Q) + b = E(Ap+b,AQA^\mathrm{T}).
\end{equation}
The \textit{Minkowski sum} $E(p_1,Q_1) \oplus E(p_2,Q_2)$, i.e. the pointwise sum between two arbitrary ellipsoids, is in general not an ellipsoid anymore, but we have that
\begin{equation}
\label{multi_step:ell:minkowski_ellipsoids}
E(p_1,Q_1) \oplus E(p_2,Q_2) \subset E \big( p_1 + p_2, (1+c^{-1})Q_1 + (1+c)Q_2 \big)
\end{equation}
for all $c > 0$. Moreover, the minimizer of the trace of the resulting shape matrix is analytically given as $c = \sqrt{Tr(Q_1) / Tr(Q_2)}$. A particular problem that we encounter is finding the maximum distance $r$ to the center of an ellipsoid $E:= E(0,Q)$ under a special transformation, i.e.
\begin{align}
\label{found:ell:prob:rayleigh_variant}
r(Q,S) = \max_{x \in E(p,Q)} ||S(x-p)||_2 =  \max_{s^\mathrm{T} Q^{-1}s \leq 1} s^\mathrm{T} S^\mathrm{T} S s,
\end{align}
where $S \in \R^{m \times n}$ with full column rank. This is a generalized eigenvalue problem of the pair $(Q,S^TS)$ and the optimizer is given as the square-root of the largest generalized eigenvalue.

\section{Safe Model Predictive Control}
\label{sec:safempc}

In this section, we use the assumptions in~\cref{ps} to design a control scheme that fulfills our safety requirements in \cref{ps:def:epsilon_safety}. We construct reliable, multi-step ahead predictions based on our GP model and use MPC to actively optimize over these predicted trajectories under safety constraints. Using \cref{ps:ass:safe_set}, we use a terminal set constraint to theoretically prove the safety of our method.

\subsection{Multi-step Ahead Predictions}
From~\cref{gp:lemma:rkhs_confidence_ivals} and our prior model~$h(x_t,u_t)$, we directly obtain high-probability confidence intervals on~$f(x_t,u_t)$ uniformly for all $t \in \N$. We extend this to over-approximate the system after a sequence of inputs $(u_t,u_{t+1},..)$. The result is a sequence of set-valued confidence regions that contain the true dynamics of the system with high probability.

\paragraph{One-step ahead predictions}
We compute an ellipsoidal confidence region that contains the next state of the system with high probability when applying a control input, given that the current state is contained in an ellipsoid.
In order to approximate the system, we linearize our prior model $h(x_t,u_t)$ and use the affine transformation property \cref{multi_step:ell:affine_trafo} to compute the ellipsoidal next state of the linearized model. Next, we approximate the unknown model-error $g(x_t,u_t)$ using the confidence intervals of our GP model. We finally apply Lipschitz arguments to outer-bound the approximation errors. We sum up these individual approximations, which result in an ellipsoidal approximation of the next state of the system. This is illustrated in~\cref{main:multi_step:fig:one_step_overapproximation}. We formally derive the necessary equations in the following paragraphs. The reader may choose to skip the technical details of these approximations, which result in~\cref{main:multi_step:lemma:one_step}.

\begin{figure}[t]
\includegraphics{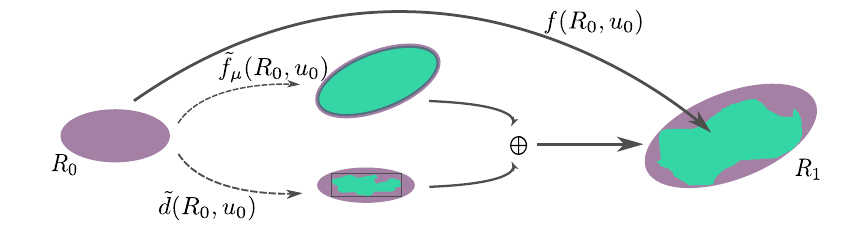}
\caption{Decomposition of the over-approximated image of the system \cref{ps:eq:system} under an ellipsoidal input $R_0$. The exact, unknown image of $f$ (right, green area) is approximated by the linearized model $\tilde{f}_\mu$ (center, top) and the remainder term $\tilde{d}$, which accounts for the confidence interval and the linearization errors of the approximation (center, bottom). The resulting ellipsoid $R_1$ is given by the Minkowski sum of the two individual approximations.}
\label{main:multi_step:fig:one_step_overapproximation}
\end{figure}

We first regard the system $f$ in \cref{ps:eq:system}
for a single input vector $z = (x,u), \,f(z) = h(z) + g(z)$.
We linearly approximate $f$ around $\bar{z} = (\bar{x},\bar{u})$ via
\begin{equation}
\label{main:multi_step:eq:taylor}
f(z) \approx h(\bar{z}) + J_{h}(\bar{z}) (z-\bar{z}) + g(\bar{z}) = \tilde{f}(z),
\end{equation}
where $J_{h}(\bar{z}) = [A,B]$ is the Jacobian of $h$ at $\bar{z}$.

Next, we use the Lagrangian remainder theorem~\cite{Breiman1993Deterministic} on the linearization of $h$ and apply a continuity argument on our locally constant approximation of $g$. This results in an upper-bound on the approximation error,
\begin{equation}
\label{main:eq:linearization_approx}
|f_j(z) - \tilde{f}_j(z) | \leq \frac{L_{\nabla h,j}}{2} ||z - \bar{z}||_2^2 + L_g ||z-\bar{z}||_2,
\end{equation}
where~$f_j(z)$ is the $i$th component of~$f$, $ 1\leq j \leq n_x$, $L_{\nabla h,j}$ is the Lipschitz constant of the gradient $\nabla h_j$, and $L_g$ is the Lipschitz constant of~$g$, which exists by~\cref{gp:lemma:lipschitz}.

The function~$\tilde{f}$ depends on the unknown model error~$g$. We approximate~$g$ with the statistical GP model, $\mu_n(\bar{z}) \approx g(\bar{z})$. From \cref{gp:lemma:rkhs_confidence_ivals} we have
\begin{equation}
\label{main:eq:rkhs_bound}
|g_j(\bar{z})-\mu_{n,j}(\bar{z})| \leq \beta_n \sigma_{n,j}(\bar{z}), \, 1\leq j \leq n_x,
\end{equation}
with high probability.
We combine \cref{main:eq:linearization_approx} and \cref{main:eq:rkhs_bound} to obtain
\begin{equation}
\label{main:eq:errorbound_deterministic}
| f_j(z) - \tilde{f}_{\mu,j}(z)| \leq \beta_n \sigma_j(\bar{z}) + \frac{L_{\nabla h,j}}{2} ||z - \bar{z}||_2^2 + L_g ||z-\bar{z}||_2,
\end{equation}
where $1 \leq j \leq n_x$ and
$
\tilde{f}_\mu(z) = h(\bar{z}) + J_{h}(\bar{z}) (z-\bar{z}) + \mu_n(\bar{z}).
$
We can interpret  \cref{main:eq:errorbound_deterministic} as the edges of the confidence hyper-rectangle
\begin{equation}
\label{main:eq:model:deterministic}
\tilde{m}(z)  = \tilde{f}_\mu(z) \pm [\beta_n\sigma_{n-1}(\bar{z})+\frac{L_{\nabla h}}{2} ||z - \bar{z}||_2^2 + L_g ||z-\bar{z}||_2 ],
\end{equation}
where $L_{\nabla h} = [L_{\nabla h,1},..,L_{\nabla h,n_x}]$ and we use the shorthand notation $a \pm b := [a_1 \pm b_1] \times [a_{n_x} \pm b_{n_x}],\, a,b \in \R^{n_x}$.

We are now ready to compute a confidence region based on an ellipsoidal state $R = E(p,Q) \subset \R^{n_x}$ and a fixed input $u \in \R^{n_u}$, by over-approximating the output of the system $f(R,u) = \{ f(x,u)| x \in R\}$ for ellipsoidal inputs $R$. Here, we choose $p$ as the linearization center of the state and choose $\bar{u} = u$, i.e. $\bar{z} = (p,u)$. Since the function $\tilde{f}_\mu$ is affine, we can make use of \cref{multi_step:ell:affine_trafo} to compute
\begin{equation}
\tilde{f}_\mu(R,u) = E(h(\bar{z}) + \mu(\bar{z}),AQA^\mathrm{T}),
\end{equation}
resulting again in an ellipsoid.
This is visualized in \cref{main:multi_step:fig:one_step_overapproximation} by the upper ellipsoid in the center.
To upper-bound the confidence hyper-rectangle on the right hand side of \cref{main:eq:model:deterministic}, we upper-bound the term~$\| z - \bar{z} \|_2$ by
\begin{equation}
\label{main:eq:remainder_overapproximation_setinput}
l(R,u) := \max\limits_{\substack{z(x) = (x,u), \\ x \in R}} ||z(x) - \bar{z}||_2,
\end{equation}
which leads to
\begin{equation}
\label{main:eq:d_overapprox_rectangle}
\tilde{d}(R, u)  := \beta_n\sigma_{n-1}(\bar{z}) + L_{\nabla h} l^2(R,u) / 2 + L_g l(R,u).
\end{equation}
Due to our choice of $z, \bar{z}$, we have that $||z(x) - \bar{z}||_2 = ||x-p||_2$ and  we can use \cref{found:ell:prob:rayleigh_variant} to get
$l(R,u) = r(Q,I_{n_x}),$ which corresponds to the largest eigenvalue of $Q^{-1}$.
Using \cref{main:eq:remainder_overapproximation_setinput}, we can now over-approximate the right side of \cref{main:eq:model:deterministic} for inputs $R$ by an ellipsoid
\begin{equation}
0 \pm \tilde{d}(R,u) \subset E(0,Q_{\tilde{d}}(R,u)),
\end{equation}
where we obtain $Q_{\tilde{d}}(R,u)$  by over-approximating the hyper-rectangle~$\tilde{d}(R, u)$ with the ellipsoid~$E(0,Q_{\tilde{d}}(R,u))$ through $a \pm b \subset E(a,\sqrt{n_x}\cdot \mathrm{diag}([b_1,..,b_{n_x}])), \, \forall a,b \in R^{n_x}$. This is illustrated in \cref{main:multi_step:fig:one_step_overapproximation} by the lower ellipsoid in the center.
Combining the previous results, we can compute the final over-approximation using \cref{multi_step:ell:minkowski_ellipsoids},
\begin{equation}
\label{main:multi_step:eq:one_step_ahead_set}
R_+ = \tilde{m}(R,u) = \tilde{f}_\mu(R,u) \oplus  E(0,Q_{\tilde{d}}(R,u)).
\end{equation}
Since we carefully incorporated all approximation errors and extended the confidence intervals around our model predictions to set-valued inputs, we get the following generalization of  \cref{gp:lemma:rkhs_confidence_ivals}.
\begin{lemma}
\label{main:multi_step:lemma:one_step}
Let $\delta \in (0,1]$ and choose $\beta_n$ as in \cref{gp:lemma:rkhs_confidence_ivals}. Then, with probability greater than $1-\delta$, we have that:
\begin{equation}
\forall x \in R: f(x,u) \in \tilde{m}(R,u),
\end{equation}
uniformly for all $R = E(p,Q) \subset \mathcal{X}, \, u \in \mathcal{U}$.
\end{lemma}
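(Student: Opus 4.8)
The plan is to isolate the only probabilistic ingredient — the uniform confidence bound of \cref{gp:lemma:rkhs_confidence_ivals} — and then show that every subsequent step in the construction of $\tilde{m}(R,u)$ is a purely deterministic set inclusion. Concretely, I would first invoke \cref{gp:lemma:rkhs_confidence_ivals} to obtain a single event $\mathcal{E}$, of probability at least $1-\delta$, on which $|\mu_{n-1,j}(z) - g_j(z)| \leq \beta_n \sigma_{n-1,j}(z)$ holds \emph{simultaneously} for all $z \in \mathcal{X} \times \mathcal{U}$ and all $1 \leq j \leq n_x$. The crucial observation is that this event is already uniform in $z$: it is one event, not a family indexed by the input. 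This is exactly what will let the final statement hold uniformly over all admissible $R$ and $u$ without any further union bound, since the remaining arguments consume no additional probability.

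Working on $\mathcal{E}$, I would fix an arbitrary ellipsoid $R = E(p,Q) \subseteq \mathcal{X}$, an arbitrary $u \in \mathcal{U}$, and an arbitrary $x \in R$, and set $z = (x,u)$, $\bar{z} = (p,u)$. Because $R \subseteq \mathcal{X}$ and $u \in \mathcal{U}$, both $z$ and $\bar{z}$ lie in $\mathcal{X}\times\mathcal{U}$, so the confidence bound \cref{main:eq:rkhs_bound} applies at $\bar{z}$. Combining it with the Lagrangian-remainder bound \cref{main:eq:linearization_approx}, which is deterministic and follows purely from the smoothness of $h$ and the Lipschitz continuity of $g$, gives the componentwise estimate \cref{main:eq:errorbound_deterministic}; equivalently, $f(z)$ lies in the confidence hyper-rectangle $\tilde{m}(z)$ of \cref{main:eq:model:deterministic}.

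The step I expect to be the main obstacle is passing from this pointwise, $x$-dependent bound to a single set $\tilde{m}(R,u)$ that contains $f(x,u)$ for \emph{every} $x \in R$. The half-width of the hyper-rectangle depends on $\|z - \bar{z}\|_2 = \|x - p\|_2$, which varies with $x$. I would bound it uniformly by $l(R,u)$ as in \cref{main:eq:remainder_overapproximation_setinput}, evaluated via the generalized-eigenvalue characterization \cref{found:ell:prob:rayleigh_variant} so that $l(R,u) = r(Q,I_{n_x})$. Since the coefficients $L_{\nabla h,j}/2$ and $L_g$ are nonnegative, the map $t \mapsto \tfrac{L_{\nabla h,j}}{2}t^2 + L_g t$ is nondecreasing on $t \geq 0$, so replacing $\|x-p\|_2$ by its maximum $l(R,u)$ only enlarges each interval. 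Hence $f(x,u) \in \tilde{f}_\mu(x,u) \pm \tilde{d}(R,u)$ for all $x \in R$, with the $x$-independent half-width $\tilde{d}(R,u)$ of \cref{main:eq:d_overapprox_rectangle}.

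It remains to recognize this inclusion as membership in $\tilde{m}(R,u)$. The affine map $\tilde{f}_\mu(\cdot,u)$ sends the ellipsoid $R$ to the ellipsoid $\tilde{f}_\mu(R,u) = E(h(\bar{z})+\mu(\bar{z}),AQA^\mathrm{T})$ by the affine-transformation property \cref{multi_step:ell:affine_trafo}, while the constant-half-width box $0 \pm \tilde{d}(R,u)$ is over-approximated by $E(0,Q_{\tilde{d}}(R,u))$ through the rectangle-to-ellipsoid inclusion $a \pm b \subseteq E(a,\sqrt{n_x}\,\mathrm{diag}(b))$. Taking the Minkowski sum and over-approximating it via \cref{multi_step:ell:minkowski_ellipsoids} yields precisely $\tilde{m}(R,u)$ as defined in \cref{main:multi_step:eq:one_step_ahead_set}, so $f(x,u) \in \tilde{m}(R,u)$. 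Since $R$, $u$, and $x$ were arbitrary and all inclusions hold deterministically on $\mathcal{E}$, the containment holds uniformly for all $R \subseteq \mathcal{X}$ and $u \in \mathcal{U}$ on an event of probability at least $1-\delta$, as claimed.
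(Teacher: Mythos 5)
Your proof is correct and follows essentially the same route as the paper: isolate the single high-probability event from \cref{gp:lemma:rkhs_confidence_ivals} (which is already uniform in $z$, hence costs no union bound over $R$ and $u$), and then observe that the linearization bound, the Lipschitz bound, the monotone replacement of $\|x-p\|_2$ by $l(R,u)$, and the ellipsoidal over-approximations are all deterministic set inclusions. Your write-up is in fact a more explicit version of the paper's two-line argument, spelling out the uniformity and the containment $\bigcup_{x\in R} m(x,u) \subset \tilde{m}(R,u)$ that the paper dispatches with ``due to the over-approximations.''
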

\begin{proof}
Define $m(x,u) = h(x,u)+\mu_n(x,u) \pm \beta_n \sigma_{n-1}(x,u)$. From \cref{gp:lemma:rkhs_confidence_ivals} we have $\forall \, R \subset \mathcal{X},\, u \in \mathcal{U}$ that, with high probability, $\bigcup_{x \in R}f(x,u) \subset \bigcup_{x \in R} m(x,u)$. Due to the over-approximations, we have $\bigcup_{x \in R} m(x,u) \subset \tilde{m}(R,u)$.
\end{proof}

\cref{main:multi_step:lemma:one_step} allows us to compute confidence ellipsoid around the next state of the system, given that the current state of the system is given through an ellipsoidal \textit{belief}.

\paragraph{Multi-step ahead predictions}
We now use the previous results to compute a sequence of ellipsoids that contain a trajectory of the system with high-probability, by iteratively applying the one-step ahead predictions~\cref{main:multi_step:eq:one_step_ahead_set}.

Given an initial ellipsoid $R_0 \subset \R^{n_x}$ and control input $u_t \in \mathcal{U}$, we iteratively compute confidence ellipsoids as
\begin{equation}
\label{multi_step:eq:t_step_delta_reach}
R_{t+1} = \tilde{m}(R_t,u_t).
\end{equation}
We can directly apply \cref{main:multi_step:lemma:one_step} to get the following result.
\begin{corollary}
\label{multi_step:lemma:t_step_delta_reach}
Let $\delta \in (0,1]$ and choose $\beta_n$ as in \cref{gp:lemma:rkhs_confidence_ivals}. Choose $x_0 \in R_0 \subset \mathcal{X}$. Then the following holds jointly for all $t \geq 0$ with probability at least $1-\delta$:
$x_t \in R_t$,
where $z_t = (x_t,u_t) \in \mathcal{X} \times \mathcal{U}$, $R_0,R_1,..$ is computed as in \cref{multi_step:eq:t_step_delta_reach} and $x_t$ is the state of the system \cref{ps:eq:system} at time step $t$.
 \end{corollary}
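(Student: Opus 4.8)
The plan is to prove the statement by induction on $t$, with the entire argument conditioned on a single high-probability event supplied by the uniform confidence bound. The crucial observation is that \cref{gp:lemma:rkhs_confidence_ivals} is not a pointwise guarantee but a uniform one: on a single event $\mathcal{E}$ of probability at least $1-\delta$, the inequality $|\mu_{n-1,j}(z)-g_j(z)| \le \beta_n \sigma_{n-1,j}(z)$ holds simultaneously for every $z \in \mathcal{X}\times\mathcal{U}$ and every output dimension $j$. Since \cref{main:multi_step:lemma:one_step} is obtained from this bound together with purely deterministic over-approximation steps (linearization with a Lagrangian remainder, the Lipschitz continuity argument, and the ellipsoidal Minkowski-sum inclusions), its conclusion $f(x,u)\in\tilde m(R,u)$ likewise holds on the \emph{same} event $\mathcal{E}$, uniformly for all $R=E(p,Q)\subset\mathcal{X}$ and $u\in\mathcal{U}$. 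No fresh randomness is introduced by any individual query.

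Working on $\mathcal{E}$, I would then run the induction. The base case $t=0$ is immediate and deterministic, since $x_0\in R_0$ holds by assumption. For the inductive step, suppose $x_t\in R_t$ with $R_t\subset\mathcal{X}$ and $u_t\in\mathcal{U}$. Applying \cref{main:multi_step:lemma:one_step} to the ellipsoid $R_t$ and input $u_t$ yields $f(x_t,u_t)\in\tilde m(R_t,u_t)$, and since $x_{t+1}=f(x_t,u_t)$ by \cref{ps:eq:system} and $R_{t+1}=\tilde m(R_t,u_t)$ by \cref{multi_step:eq:t_step_delta_reach}, we conclude $x_{t+1}\in R_{t+1}$. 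Because this deduction uses only the deterministic inclusions valid on $\mathcal{E}$, the whole chain $x_0\in R_0, x_1\in R_1,\dots$ holds simultaneously on $\mathcal{E}$, and therefore $\Pr[\forall t\ge 0:\, x_t\in R_t]\ge\Pr[\mathcal{E}]\ge 1-\delta$.

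The subtle point, and the step I expect to require the most care, is exactly the word \emph{jointly}: the statement must hold for all (infinitely many) time steps within a single budget $\delta$, so one cannot afford a union bound over $t$. This is what forces the reliance on the uniform-over-domain form of \cref{gp:lemma:rkhs_confidence_ivals} rather than a per-step confidence interval; a naive per-step argument would accumulate failure probability and break down as $t\to\infty$. A related subtlety is that the query points $z_t=(x_t,u_t)$ are themselves trajectory-dependent, and hence random, since they depend on the realized $g$, so an adaptive argument is implicitly needed; uniformity dissolves this difficulty, because the bound already covers every point of $\mathcal{X}\times\mathcal{U}$ regardless of how it is selected, without any independence assumption across queries. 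Finally, I would make explicit that the inductive step applies only while $R_t\subset\mathcal{X}$ and $u_t\in\mathcal{U}$, matching the domain hypotheses of \cref{main:multi_step:lemma:one_step}; these are precisely the feasibility conditions the MPC scheme will enforce, and they are what the clause $z_t\in\mathcal{X}\times\mathcal{U}$ in the statement records.
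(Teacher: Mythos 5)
Your argument is essentially the paper's own proof, made explicit: the paper likewise observes that \cref{main:multi_step:lemma:one_step} holds uniformly over all ellipsoids $R\subset\mathcal{X}$ and inputs $u\in\mathcal{U}$ on a single $1-\delta$ event, so the sequence $R_0,R_1,\dots$ is just a particular (adaptively chosen) instance and no union bound over $t$ is needed. Your added care about the single event $\mathcal{E}$, the deterministic induction on it, and the trajectory-dependence of the query points is a correct and somewhat more detailed rendering of the same reasoning.
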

 \begin{proof}
 Since \cref{main:multi_step:lemma:one_step} holds uniformly for all ellipsoids $R \subset \mathcal{X}$ and $u \in \mathcal{U},$ this is a special case that holds uniformly for all control inputs $u_t, \, t \in \N$ and for all ellipsoids $R_t,\, t \in \N$  obtained through \cref{multi_step:eq:t_step_delta_reach}.
 \end{proof}

\cref{multi_step:lemma:t_step_delta_reach} guarantees that, with high probability, the system is always contained in the propagated ellipsoids \cref{multi_step:eq:t_step_delta_reach}. Thus, if we provide safety guarantees for these sequences of ellipsoids, we obtain high-probability safety guarantees for the system ~\cref{ps:eq:system}.

\paragraph{Predictions under state-feedback control laws}
When applying multi-step ahead predictions under a sequence of feed-forward inputs $u_t \in \mathcal{X}$, the individual sets of the corresponding reachability sequence can quickly grow unreasonably large. This is because these \textit{open loop} input sequences do not account for future control inputs that could correct deviations from the model predictions.
Hence, we extend \cref{main:multi_step:eq:one_step_ahead_set} to
\textit{affine state-feedback control laws} of the form
\begin{equation}
\label{main:multi_sep:eq:feedback_ctrl}
\pi_t(x_t) := K_t(x_t-p_t) + u_t,
\end{equation}
where $K_t \in \R^{n_u \times n_x}$ is a feedback matrix and $u_t \in \R^{n_u}$ is the open-loop input. The parameter $p_t$ is determined through the center of the current ellipsoid $R_t = E(p_t,Q_t)$. Given an appropriate choice of $K_t$, the control law actively contracts the ellipsoids towards their center.
Similar to the derivations \cref{main:multi_step:eq:taylor}-\cref{main:multi_step:eq:one_step_ahead_set}, we can compute the function $\tilde{m}$ for affine feedback controllers \cref{main:multi_sep:eq:feedback_ctrl} $\pi_t$ and ellipsoids $R_t =E(p_t,Q_t)$.
The resulting ellipsoid is
\begin{equation}
\label{main:multi_step:eq:over_approx_state_feedback}
\tilde{m}(R_t,\pi_t) = E(h(\bar{z}_t)+\mu(\bar{z}_t),H_tQ_tH_t^\mathrm{T}) \oplus E(0,Q_{\tilde{d}}(R_t,\pi_t)),
\end{equation}
where $\bar{z_t} = (p_t,u_t)$ and $H_t = A_t + B_t K_t$. The set $E(0,Q_{\tilde{d}}(R_t,\pi_t))$ is obtained similarly to \cref{main:eq:remainder_overapproximation_setinput} as the ellipsoidal over-approximation of
\begin{equation}
0 \pm [\beta_n \sigma(\bar{z}) + L_{\nabla h} \frac{l^2(R_t,S_t)}{2}  + L_gl(R_t,S_t)],
\end{equation}
with $S_t = [I_{n_x},K_t^\mathrm{T}]$ and $l(R_t,S_t) = \max_{x \in R_t}||S_t(z(x)-\bar{z_t})||_2$. The theoretical results of \cref{main:multi_step:lemma:one_step} and \cref{multi_step:lemma:t_step_delta_reach} directly apply to the case of the uncertainty propagation technique \cref{main:multi_step:eq:over_approx_state_feedback}.
\subsection{Safety constraints}
The derived multi-step ahead prediction technique provides a sequence of ellipsoidal confidence regions around trajectories of the true system $f$ through  \cref{multi_step:lemma:t_step_delta_reach}. We can guarantee that the system is safe by verifying that the computed confidence ellipsoids are contained inside the polytopic constraints \cref{ps:ass:state_constraits} and \cref{ps:ass:ctrl_constraits}.
That is, given a sequence of feedback controllers $\pi_t, \, t = {0,..,T-1}$ we need to verify
\begin{equation}
R_{t+1} \subset \mathcal{X}, \, \pi_t(R_t) \subset \mathcal{U}, \, t=0,..,T-1,
\end{equation}
where $(R_0,..,R_T)$ is given through \cref{multi_step:eq:t_step_delta_reach}.

Since our constraints are polytopes, we have that
$\mathcal{X} = \bigcap_{i=1}^{m_x} \mathcal{X}_i$,
$\mathcal{X}_i = \{x \in \R^{n_x} |  [H_x]_{i,\cdot}  x - h_i^x\leq 0 \},$
where $[H_x]_{i,\cdot}$ is the $i$th row of $H^x$.
We can now formulate the state constraints through the condition $R_t = E(p_t,Q_t) \subset \mathcal{X}$ as $m_x$ individual constraints $R_t \subset \mathcal{X}_i,\, i= 1,..,m_x$, for which an analytical formulation exists~\cite{Hessem2002Closedloop},
\begin{equation}
\label{multi_step:ell:ell_polytope_state_constr}
[H_x]_{i,\cdot}p_t + \sqrt{[H_x]_{i,\cdot}Q_t[H_x]_{i,\cdot}^T}   \leq h^x_i, \, \forall i \in \{1,..,m_x\}.
\end{equation}

Moreover, we can use the fact that $\pi_t$ is affine  in~$x$ to obtain $\pi_t(R_t) = E(k_t,K_tQ_t,K_t^T)$, using \cref{multi_step:ell:affine_trafo}. The corresponding control constraint $\pi_t(R_t) \subset \mathcal{U}$ is then equivalently given by
\begin{equation}
\label{multi_step:ell:ell_polytope_control_constr}
[H_u]_{i,\cdot}u_t + \sqrt{[H_u]_{i,\cdot}K_tQ_tK_t^\mathrm{T} [H_u]_{i,\cdot}^\mathrm{T}} \leq h^u_i, \, \forall i \in \{1,..,m_u\}.
\end{equation}

\subsection{The SafeMPC algorithm}

Based on the previous results, we formulate a MPC scheme that optimizes the long-term performance of our system, while satisfying the safety condition in~\cref{ps:def:epsilon_safety}:
\begin{subequations}
\label{main:safempc:mpc:mpc_problem}
\begin{align}
&\underset{\pi_0,..,\pi_{T-1}}{\text{minimize}}
& & J_t(R_0,..,R_T) \\
\label{main:safempc:mpc:mpc_problem:a}
& \text{subject to }
 & & R_{t+1} = \tilde{m}(R_t,\pi_t), \, t=0,..,T-1   \\
& & & R_t \subset \mathcal{X}, \, t = 1,..,T-1 \\
& & & \pi_t(R_t) \subset \mathcal{U}, t = 0,..,T-1 \\
\label{main:safempc:mpc:mpc_problem:d}
& & & R_T \subset \mathcal{X}_{\mathrm{safe}},
\end{align}
\end{subequations}
where $R_0 := \{x_t\}$ is the current state of the system and the intermediate state and control constraints are defined in \cref{multi_step:ell:ell_polytope_state_constr}, \cref{multi_step:ell:ell_polytope_control_constr}. The terminal set constraint $R_T \subset \mathcal{X}_{\mathrm{safe}}$ has the same form as \cref{multi_step:ell:ell_polytope_state_constr} and can be formulated accordingly. The objective $J_t$ can be chosen to suit the given control task.

Due to the terminal constraint $R_T \subset \mathcal{X}_{\mathrm{safe}}$, a solution to  \cref{main:safempc:mpc:mpc_problem} provides a sequence of feedback controllers $\pi_0,..,\pi_T$ that steer the system back to the safe set $\mathcal{X}_{\mathrm{safe}}$.
We cannot directly show that a solution to MPC problem \cref{main:safempc:mpc:mpc_problem} exists at every time step (this property is known as recursive feasibility) without imposing additional assumption, e.g. on the safety controller $\pi_{\mathrm{safe}}$. However, employing a control scheme similar to standard robust MPC, we guarantee that such a sequence of feedback controllers exists at every time step as follows: Given a feasible solution $ \Pi_t = ( \pi_t^0,.., \pi_{t}^{T-1} )$ to \cref{main:safempc:mpc:mpc_problem} at time $t$, we apply the first feed-back control $\pi_t^0$. In case we do not find a feasible solution to \cref{main:safempc:mpc:mpc_problem} at the next time step, we shift the previous solution in a receding horizon fashion and append $\pi_{\mathrm{safe}}$ to the sequence to obtain $\Pi_{t+1} = ( \pi_t^1,.., \pi_{t}^{T-1}, \pi_{\mathrm{safe}} )$. We repeat this process until a new feasible solution exists that replaces the previous input sequence. This procedure is summarized in \cref{main:safempc:algo:safety_algorithm}.
We now state the main result of the paper that guarantees the safety of our system under the proposed algorithm.
\begin{theorem}
\label{main:safempc:theorem:safe_algorithm}
Let $\pi$ be the controller defined through algorithm \cref{main:safempc:algo:safety_algorithm} and $x_0 \in X_{\mathrm{safe}}$.
Then the system \cref{ps:eq:system} is $\delta-$safe under the controller $\pi$.
\end{theorem}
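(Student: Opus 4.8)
The plan is to reduce the probabilistic safety statement of \cref{ps:def:epsilon_safety} to a purely deterministic recursive-feasibility argument that holds on a single high-probability event. First I would condition on the event $\mathcal{E}$ underlying \cref{gp:lemma:rkhs_confidence_ivals}, which occurs with probability at least $1-\delta$. The decisive observation is that \cref{main:multi_step:lemma:one_step}, and hence \cref{multi_step:lemma:t_step_delta_reach}, hold \emph{uniformly} over all ellipsoids $R \subset \mathcal{X}$ and inputs $u \in \mathcal{U}$; consequently, on $\mathcal{E}$ every one-step containment $f(x,u) \in \tilde{m}(R,u)$ is valid \emph{simultaneously}, regardless of how the control is chosen online. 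This uniformity is exactly what lets the argument survive the closed-loop, adaptive nature of the MPC: whatever feedback law the algorithm executes at step $t$, the realized successor state lies in the corresponding predicted ellipsoid $R_1$. It therefore suffices to show that, on $\mathcal{E}$, the executed trajectory satisfies the state and input constraints for all $t \in \N$, which yields $\Pr[\forall t: f_\pi(x_t) \in \mathcal{X},\, \pi(x_t)\in\mathcal{U}] \geq \Pr[\mathcal{E}] \geq 1-\delta$.

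Next I would note that single-step constraint satisfaction follows from feasibility of the MPC problem \cref{main:safempc:mpc:mpc_problem} at that step. If $\Pi_t$ is feasible with associated ellipsoids $R_0 = \{x_t\}, R_1, \dots, R_T$, then $\pi_t^0(R_0) \subset \mathcal{U}$ gives $\pi(x_t) = \pi_t^0(x_t) \in \mathcal{U}$, while $R_1 \subset \mathcal{X}$ together with $x_{t+1} \in R_1$ (valid on $\mathcal{E}$) gives $f_\pi(x_t) = x_{t+1} \in \mathcal{X}$. Hence $\delta$-safety reduces to recursive feasibility: a feasible solution to \cref{main:safempc:mpc:mpc_problem} exists at \emph{every} time step under \cref{main:safempc:algo:safety_algorithm}.

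I would then prove recursive feasibility by induction on $t$, mirroring the standard robust-tube-MPC argument. For the base case, since $x_0 \in \mathcal{X}_{\mathrm{safe}}$, the candidate that applies $\pi_{\mathrm{safe}}$ at every stage is feasible: by the RCPI property in \cref{ps:ass:safe_set} all propagated ellipsoids stay in $\mathcal{X}_{\mathrm{safe}} \subseteq \mathcal{X}$, the input constraints hold because $\pi_{\mathrm{safe}}(x) \in \mathcal{U}$ on $\mathcal{X}_{\mathrm{safe}}$, and the terminal constraint $R_T \subset \mathcal{X}_{\mathrm{safe}}$ holds trivially. For the inductive step, given a feasible $\Pi_t = (\pi_t^0, \dots, \pi_t^{T-1})$, I would argue that the shifted-and-appended candidate $\Pi_{t+1} = (\pi_t^1, \dots, \pi_t^{T-1}, \pi_{\mathrm{safe}})$ used by the algorithm is feasible for the problem reinitialized at $R_0 = \{x_{t+1}\}$. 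Since $x_{t+1} \in R_1$ on $\mathcal{E}$, one reuses the tail controllers so that the new ellipsoids remain contained in the previous $R_2, \dots, R_T$ (keeping the intermediate state and input constraints satisfied), and appending $\pi_{\mathrm{safe}}$ to the terminal set $R_T \subset \mathcal{X}_{\mathrm{safe}}$ keeps the final ellipsoid inside $\mathcal{X}_{\mathrm{safe}}$, restoring the terminal constraint. As the algorithm adopts this candidate whenever a freshly optimized solution is unavailable, feasibility is maintained, closing the induction.

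The hard part will be the inductive step, and specifically the containment claim hidden in \emph{reusing the tail controllers}. The over-approximation $\tilde{m}(R,\pi)$ re-linearizes $h$ at the center of its input ellipsoid and bounds the Lagrange/Lipschitz remainders by the maximal deviation over that ellipsoid; when the problem is reinitialized at the singleton $\{x_{t+1}\}$, the linearization centers of the whole candidate trajectory shift away from the previously predicted centers $p_t$ at which the feedback laws \cref{main:multi_sep:eq:feedback_ctrl} were anchored. Monotonicity of $\tilde{m}$ under inclusion of its ellipsoidal argument is therefore not immediate, and I would need to establish a containment property that accommodates this shifted nominal/center trajectory (for instance, showing the reused tube around the shifted centers still over-bounds the reachable set). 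A second, related subtlety is that the terminal recursion requires \cref{ps:ass:safe_set} to be strong enough to apply to the ellipsoidal propagation itself, i.e. $\tilde{m}(R,\pi_{\mathrm{safe}}) \subset \mathcal{X}_{\mathrm{safe}}$ whenever $R \subset \mathcal{X}_{\mathrm{safe}}$; robust invariance of the true dynamics alone need not imply invariance of the looser over-approximation, so I would interpret or strengthen the RCPI assumption accordingly.
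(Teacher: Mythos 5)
There is a genuine gap, and it lies exactly where you flagged it: your argument reduces $\delta$-safety to recursive feasibility of \cref{main:safempc:mpc:mpc_problem} and then attempts the standard shifted-candidate induction, but the two containment properties you would need — monotonicity of $\tilde{m}$ under re-linearization at the shifted centers, and invariance of the ellipsoidal over-approximation (rather than the true dynamics) under $\pi_{\mathrm{safe}}$ — are not available from the stated assumptions. The paper is explicit about this: it says that recursive feasibility \emph{cannot} be shown without imposing additional assumptions on $\pi_{\mathrm{safe}}$. Your base case already runs into the second obstacle ($\X_{\mathrm{safe}}$ being RCPI for $f$ under $\pi_{\mathrm{safe}}$ does not imply $\tilde{m}(R,\pi_{\mathrm{safe}}) \subset \X_{\mathrm{safe}}$ for $R \subset \X_{\mathrm{safe}}$; indeed $\pi_{\mathrm{safe}}$ need not even have the affine form \cref{main:multi_sep:eq:feedback_ctrl} over which the MPC optimizes), so the all-$\pi_{\mathrm{safe}}$ candidate is not known to be feasible for \cref{main:safempc:mpc:mpc_problem}. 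Acknowledging these holes does not close them, and the induction does not go through as proposed.

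The missing idea is that \cref{main:safempc:algo:safety_algorithm} is designed precisely so that safety does \emph{not} hinge on recursive feasibility. Your first paragraph (conditioning on the single high-probability event and invoking the uniformity of \cref{main:multi_step:lemma:one_step} and \cref{multi_step:lemma:t_step_delta_reach} over all ellipsoids and inputs) is correct and matches the paper; the divergence is in what you do next. The correct continuation is: at every time step the control actually applied is the first element of the \emph{current stored sequence} $\Pi_t$, which is either a freshly verified feasible solution or the shifted tail of an older feasible solution padded with $\pi_{\mathrm{safe}}$. On the high-probability event, every feasible solution's constraints — state, input, and terminal — are genuinely satisfied by the realized trajectory for its whole horizon, so executing a previously verified plan to completion keeps the system in $\X$ and $\U$ and delivers it to $\X_{\mathrm{safe}}$ after at most $T$ steps; from there \cref{ps:ass:safe_set} gives safety for all remaining time under $\pi_{\mathrm{safe}}$. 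Infeasibility at intermediate steps is therefore harmless: the algorithm simply continues the old plan, and a new plan is only adopted when it is itself verified feasible (and hence safe on the same event). This induction over stored plans, rather than over feasibility of the optimization problem, is what proves the theorem.
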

\begin{proof} %
From \cref{multi_step:lemma:t_step_delta_reach}, the ellipsoidal outer approximations (and by design of the MPC problem, also the constraints \cref{ps:ass:state_constraits}) hold uniformly with high probability for \textit{all} closed-loop systems $f_\Pi$, where $\Pi$ is a feasible solution to \cref{main:safempc:mpc:mpc_problem}, over the corresponding time horizon $T$.
Hence we can show uniform high probability safety by induction.
Base case: If \cref{main:safempc:mpc:mpc_problem} is infeasible, we are $\delta$-safe using the backup controller $\pi_{\mathrm{safe}}$ of \cref{ps:ass:safe_set}, since $x_0 \in \mathcal{X}_\mathrm{safe}$. Otherwise the controller returned from \cref{main:safempc:mpc:mpc_problem} is $\delta$-safe as a consequence of \cref{multi_step:lemma:t_step_delta_reach} and the terminal set constraint that leads to $x_{t+T} \in \mathcal{X}_\mathrm{safe}$.
Induction step: let the previous controller $\pi_t$ be $\delta$-safe. At time step $t+1$, if \cref{main:safempc:mpc:mpc_problem} is infeasible then $\Pi_t$ leads to a state $x_{t+T} \in \mathcal{X}_\mathrm{safe}$, from which the backup-controller is $\delta$-safe by \cref{ps:ass:safe_set}. If \cref{main:safempc:mpc:mpc_problem} is feasible, then the return path is $\delta$-safe by \cref{multi_step:lemma:t_step_delta_reach}.
\end{proof}

\begin{algorithm}[t]
\caption{Safe Model Predictive Control (\textsc{SafeMPC})}
\label{main:safempc:algo:safety_algorithm}
\begin{algorithmic}[1]
\State \textbf{Input:} Safe policy $\pi_\mathrm{safe}$, dynamics model $h$, statistical model $\mathcal{GP}(0,k)$. 
\State $\Pi_0 \gets \{\pi_{\mathrm{safe}},..,\pi_{\mathrm{safe}}\}$ with $|\Pi_0| = T$
\For{$t=0,1,..$}
	\State $J_t \gets$ objective from high-level planner
	\State feasible, $\Pi \gets$ solve MPC problem \cref{main:safempc:mpc:mpc_problem}
    \If{feasible}: $\Pi_t \gets \Pi$
    \Else: $\Pi_t \gets \left(\Pi_{t-1,1:T-1},\pi_{\mathrm{safe}}  \right)$ 
    \EndIf
    \State $x_{t+1} \gets$ apply $u_t = \Pi_{t,0}(x_t)$ to the system \cref{ps:eq:system}
\EndFor
\end{algorithmic}%
\end{algorithm}


\subsection{Optimizing long-term behavior}
\label{main:perf_trajectories}
While the proposed MPC problem \cref{main:safempc:mpc:mpc_problem} yields a safe return strategy, we are often interested in a controller that optimizes performance over a possibly much longer horizon. In the autonomous driving example, a safety trajectory that stabilizes the car towards the center of the lane can be much shorter than for planning a steering maneuver before entering a turn.
We hence propose to simultaneously plan a \textit{performance trajectory} $s_0,..,s_H$ under a sequence of inputs $\pi^{\textrm{perf}}_{0},..,\pi^{\textrm{perf}}_{H-1}$ using a performance-model $m_{\mathrm{perf}}$ along with the return strategy that we obtain when solving \cref{main:safempc:mpc:mpc_problem}. We do not make any assumptions on the performance model which could be given by one of the approximate uncertainty propagation methods proposed in the literature (see, e.g. \cite{Hewing2018Cautious} for an overview). In order to maintain the safety of our system, we enforce that the first $r \in \{1,.,  \min\{T,H\} \}$ controls are the same for both trajectories, i.e. we have that $\pi_k = \pi^{\mathrm{perf}}_k, k = 0,..,r-1 $. This extended MPC problem is
\begin{equation}
\label{main:safempc:perf:mpc_problem}
\begin{aligned}
&\underset{\substack{\pi_t,..,\pi_{t+T-1} \\ \pi^{\mathrm{perf}}_{t},..,\pi^{\mathrm{perf}}_{t+H-1}}}{\text{minimize}}
& & J_t(s_t,..,s_{t+H}) \\
& \text{subject to }
 & & \cref{main:safempc:mpc:mpc_problem:a}-\cref{main:safempc:mpc:mpc_problem:d},\,  t = 0,..,T-1 \\
& & & s_{t+1}  = m_{\mathrm{perf}}(s_t,\pi^{\mathrm{perf}}_{t}), t = 0,..,H-1 \\
& & & \pi_{t} = \pi^{\mathrm{perf}}_{t}, \, t = 0,..,r-1 ,
\end{aligned}
\end{equation}
where we replace \cref{main:safempc:mpc:mpc_problem} with this problem in  \cref{main:safempc:algo:safety_algorithm}. The safety guarantees of  \cref{main:safempc:theorem:safe_algorithm} directly translate to this setting, since we can always fall back to the return strategy.

\subsection{Discussion}
Algorithm \cref{main:safempc:algo:safety_algorithm} theoretically guarantees that the system remains safe, while actively optimizing for performance via the MPC problem $\cref{main:safempc:perf:mpc_problem}$. This problem can be solved by commonly used, nonlinear programming (NLP) solvers, such as the \textit{Interior Point OPTimizer (Ipopt, \cite{Wachter2006Implementation})}. Due to the solution of the eigenvalue problem \cref{found:ell:prob:rayleigh_variant} that is required to compute \cref{main:multi_step:eq:one_step_ahead_set}, our uncertainty propagation scheme is not analytic. However, we can still obtain exact function values and derivative information by means of algorithmic differentiation, which is at the core of many state-of-the-art optimization software libraries~\cite{Andersson2013b}.

One way to further reduce the conservatism of the multi-step ahead predictions is to linearize the GP mean prediction $\mu_n(x_t,u_t)$, which we omitted for clarity.
\section{Experiments}
In this section, we evaluate the proposed \textsc{SafeMPC} algorithm to safely explore the dynamics of an inverted pendulum system.

\begin{figure*}[t]
\includegraphics{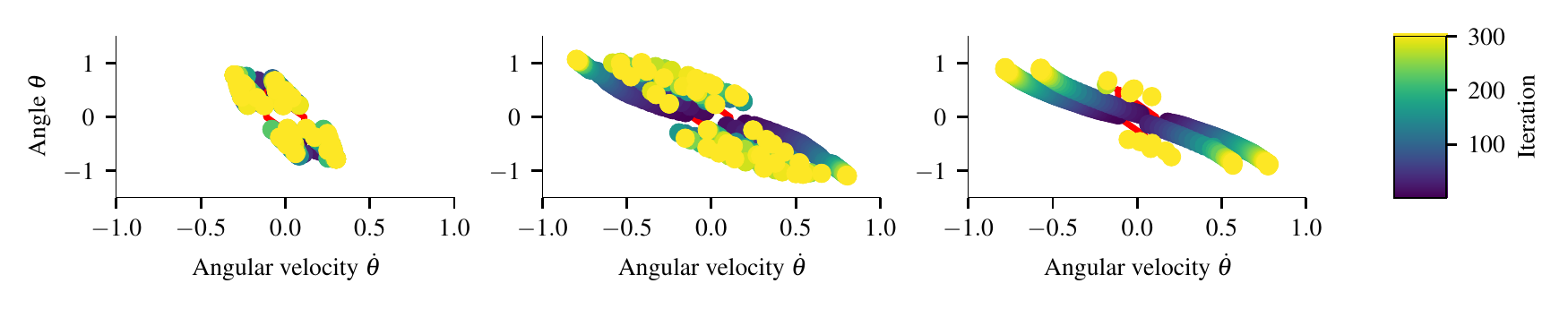}
\caption{Visualization of the samples acquired in the static exploration setting in~\cref{ssec:static_exploration} for $T \in \{1,4,5\}$. The algorithm plans informative paths to the safe set $\mathcal{X}_{\mathrm{safe}}$ (red polytope in the center). The baseline sample set for $T=1$ (left) is dense around origin of the system. For $T=4$ (center) we get the optimal trade-off between cautiousness due to a long horizon and limited length of the return trajectory due to a short horizon. The exploration for $T=5$ (right) is too cautious, since the propagated uncertainty at the final state is too large.}
\label{exp:fig:static_exploration:sampleset}
\end{figure*}

The continuous-time dynamics of the pendulum are given by $ml^2\ddot{\theta} = gml\sin(\theta) - \eta \dot{\theta} + u$, where $m = 0.15 \unit{kg}$ and $l=0.5\unit{m}$ are the mass and length of the pendulum, respectively, $\eta = 0.1 \unitfrac{Nms}{rad}$ is a friction parameter, and $g = 9.81 \unitfrac{m}{s^2}$ is the gravitational constant. The state of the system $x = (\theta,\dot{\theta})$ consists of the angle $\theta$ and angular velocity $\dot{\theta}$ of the pendulum. The system is controlled by a torque~$u$ that is applied to the pendulum. The origin of the system corresponds to the pendulum standing upright.

The system is underactuated with control constraints $\mathcal{U} = \{ u \in \R |-1 \leq u \leq 1\}$. Due to these limits, the pendulum becomes unstable and falls down beyond a certain angle. We do not impose state constraints,~$\mathcal{X} = \R^2$. However the terminal set constraint \cref{main:safempc:mpc:mpc_problem:d} of the MPC problem \cref{main:safempc:mpc:mpc_problem} acts as a stability constraint and prevents the pendulum from falling.
Apart from being smooth, we do not make any assumptions on our prior model $h$ and we choose it to be a linearized and discretized approximation to the true system with a lower mass and neglected friction as in ~\cite{Berkenkamp2017Safe-1}.
The safety controller~$\pi_{\mathrm{safe}}$ is a discrete-time, infinite horizon linear quadratic regulator (LQR,\cite{Kwakernaak1972Linear}) of the approximated system $h$ with cost matrices
$ Q = \mathrm{diag}([1,2])$, $R = 20$. The corresponding safety region $\mathcal{X}_{\mathrm{Safe}}$ is given by a conservative polytopic inner-approximation of the true region of attraction of $\pi_{\mathrm{safe}}$.
We use the same mixture of linear and Mat\'ern kernel functions for both output dimensions, albeit with different hyperparameters. We initially train our model with a dataset $(Z_0,\tilde{y}_0)$ sampled inside the safe set using the backup controller $\pi_{\mathrm{Safe}}$. That is, we gather $n_0 = 25$ initial samples $Z_0 = \{ z_1^0,..,z^0_{n_0} \}$ with $z_i^0 = (x_i, \pi_{\mathrm{safe}}(x_i)), x_i \in \mathcal{X}_{\mathrm{safe}}, \, i = 1,..,n$ and observed next states $ \tilde{y}_0 =\{y^0_0,..,y^0_{n_0}\} \subset \mathcal{X}_{\mathrm{safe}}$. The theoretical choice of the scaling parameter $\beta_n$ for the confidence intervals in \cref{gp:lemma:rkhs_confidence_ivals} can be conservative and we choose a fixed value of $\beta_n = 2$ instead, following~\cite{Berkenkamp2017Safe-1}.

We aim to iteratively collect the most informative samples of the system, while preserving its safety. To evaluate the exploration performance, we use the mutual information $I(g_{Z_n},g)$ between the collected samples $Z_n = \{z_0,..,z_n\} \cup Z_0$ and the GP prior on the unknown model-error~$g$, which can be computed in closed-form~\cite{Srinivas2010Gaussian}.

\subsection{Static Exploration}
\label{ssec:static_exploration}

For a first experiment, we assume that the system is \textit{static}, so that we can reset the system to an arbitrary state $x_n \in \R^2$ in every iteration. In the static case and without terminal set constraints, a provably close-to-optimal exploration strategy is to, at each iteration~$n$, select state-action pair~$z_{n+1}$ with the largest predictive standard deviation~\cite{Srinivas2010Gaussian}
\begin{equation}
\label{exp:max_gp_confidence_intervals}
z_{n+1} = \underset{z \in \mathcal{X} \times \mathcal{U}}{\arg\max} \sum_{1 \leq j \leq n_x } \sigma_{n,j}(z),
\end{equation}
where $\sigma_{n,j}^2(\cdot)$ is the predictive variance~\cref{found:gp:pred_var} of the $j$th $\mathcal{GP}(0,k_j)$ at the $n$th iteration. Inspired by this, at each iteration we collect samples by solving the MPC problem \cref{main:safempc:mpc:mpc_problem} with cost function~$J_n = -\sum_{j=1}^{n_x} \sigma_{n,j}$, where we additionally optimize over the initial state $x_n \in \mathcal{X}$. Hence, we visit high-uncertainty states, but only allow for state-action pairs $z_n$ that are part of a feasible return trajectory to the safe set $\mathcal{X}_{\mathrm{Safe}}$.

Since optimizing the initial state is highly non-convex, we solve the problem iteratively with $25$ random initializations to obtain a good approximation of the global minimizer. After every iteration, we update the sample set $Z_{n+1} = Z_n \cup \{z_n\}$, collect an observation $(z_n,y_n)$ and update the GP models. We apply this procedure for varying horizon lengths.

The resulting sample sets are visualized for varying horizon lengths $T \in \{1,..,5\}$ with $300$ iterations in \cref{exp:fig:static_exploration:sampleset}, while \cref{exp:fig:static_exploration:inf_gain} shows how the mutual information of the sample sets $Z_i,\, i = 0,..,n$ for the different values of $T$. For short time horizons ($T=1$), the algorithm can only slowly explore, since it can only move one step outside of the safe set. This is also reflected in the mutual information gained, which levels off quickly. For a horizon length of $T=4$, the algorithm is able to explore a larger part of the state-space, which means that more information is gained. For larger horizons, the predictive uncertainty of the final state is too large to explore effectively, which slows down exploration initially, when we do not have much information about our system. The results suggest that our approach could further benefit from adaptively choosing the horizon during operation, e.g. by employing a variable horizon MPC approach \cite{RichardsArthur2006Robust}, or by increasing the horizon when the mutual information saturates for the current horizon.
\begin{figure}[t]
\includegraphics{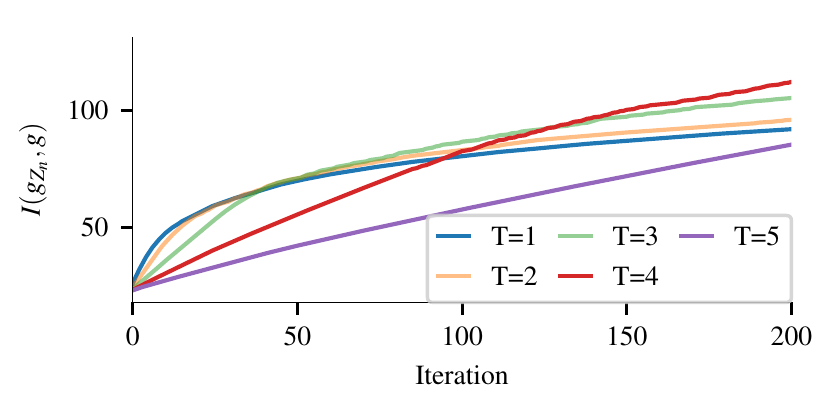}
\caption{Mutual information $I(g_{Z_n},g), \, n=1,..,200$ for horizon lengths $T \in \{1,..,5\}$. Exploration settings with shorter horizon gather more informative samples at the beginning, but less informative samples in the long run. Longer horizon lengths result in less informative samples at the beginning, due to uncertainties being propagated over long horizons. However, after having gathered some knowledge they quickly outperform the smaller horizon settings. The best trade off is found for $T=4$.
}
\label{exp:fig:static_exploration:inf_gain}
\end{figure}

\subsection{Dynamic Exploration}

As a second experiment, we collect informative samples during operation; without resetting the system at every iteration. Starting at~$x_0 \in \mathcal{X}_\mathrm{safe}$, we apply the \textsc{SafeMPC}, \cref{main:safempc:algo:safety_algorithm}, over $200$ iterations.
We consider two settings. In the first, we solve the MPC problem~\cref{main:safempc:mpc:mpc_problem} with $-J_n$ given by \cref{exp:max_gp_confidence_intervals}, similar to the previous experiments. In the second setting, we additionally plan a performance trajectory as proposed in~\cref{main:perf_trajectories}. We define the states of the performance trajectory as Gaussians $s_{t} = \mathcal{N} (m_t,S_t) \in \R^{n_x} \times \R^{n_x \times n_x}$ and the next state is given by the predictive mean and variance of the current state $m_t$ and applied action $u_t$. That is,~$s_{t+1}= \mathcal{N}(m_{t+1},S_{t+1})$ with
\begin{equation*}
m_{t+1} = \mu_n(m_t,u_t), \, S_{t+1} = \Sigma_n(m_t,u_t) , t = 0,..,H-1,
\end{equation*}
where $\Sigma_n(\cdot) = \mathrm{diag}(\sigma_n^2(\cdot))$ and $m_0 = x_n$.
This simple approximation technique is known as \textit{mean-equivalent} uncertainty propagation. We define the cost-function $-J_t = \sum_{t=0}^{H} \mathrm{trace}(S_t^{1/2}) - \sum_{t=1}^{T} (m_t-p_t)^TQ_{\mathrm{perf}}(m_t-p_t)$, which maximizes the sum of predictive confidence intervals along the trajectory $s_1,..,s_H$, while penalizing deviation from the safety trajectory. We choose~$r = 1$ in the problem \cref{main:safempc:perf:mpc_problem}, i.e. the first action of the safety trajectory and performance trajectory are the same. As in the static setting, we update our GP models after every iteration.

We evaluate both settings for varying $T \in \{1,..,5\}$ and fixed $H=5$ in terms of their mutual information in~\cref{exp:fig:dynamic_expl:information_gain}. We observe a similar behavior as in the static exploration experiments and get the best exploration performance for $T=4$ with a slight degradation of performance for $T=5$. We can see that, except for $T=1$, the performance trajectory decomposition setting consistently outperforms the standard setting. Planning a performance trajectory (green) provides the algorithm with an additional degree of freedom, which leads to drastically improved exploration performance.

\begin{figure}[t]
\includegraphics{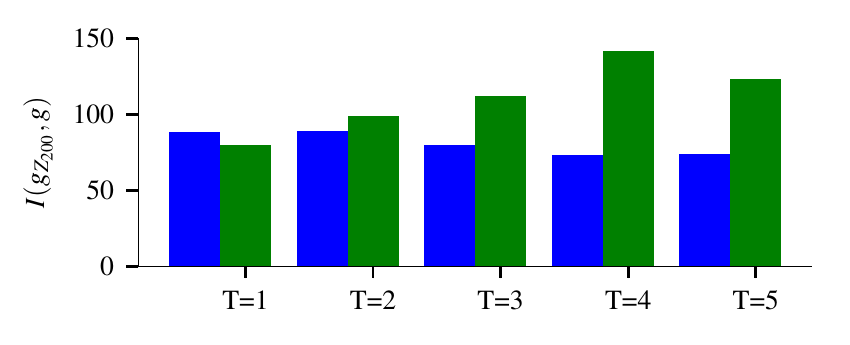}
\caption{Comparison of the information gathered from the system after $200$ iterations for the standard setting (blue) and the setting where we plan an additional performance trajectory (green). 
}
\label{exp:fig:dynamic_expl:information_gain}
\end{figure}

\section{Conclusion}
We introduced~\textsc{SafeMPC}, a learning-based MPC scheme that can safely explore partially unknown systems. The algorithm is based on a novel uncertainty propagation technique that uses a reliable statistical model of the system. As we gather more data from the system and update our statistical mode, the model becomes more accurate and control performance improves, all while maintaining safety guarantees throughout the learning process.

\bibliographystyle{IEEEtran}
\bibliography{refs_zotero}

\end{document}